\PassOptionsToPackage{prologue,usenames,dvipsnames,table}{xcolor}
\PassOptionsToPackage{bookmarks,unicode,colorlinks=true}{hyperref}
\newif\ifanonymous

\anonymousfalse

\documentclass[runningheads]{llncs}

\usepackage{amssymb,amsmath,mathtools}
\usepackage{nicefrac}
\usepackage{mathrsfs}
\usepackage{xspace}
\usepackage[inline]{enumitem}
\usepackage{comment}
\usepackage{thm-restate}
\usepackage{graphicx}
\usepackage[hyperfootnotes=true,colorlinks,linkcolor=RedViolet,anchorcolor=RedViolet,citecolor=blue,urlcolor=black]{hyperref}
\usepackage[capitalize,nameinlink]{cleveref}
\usepackage{adjustbox}
\usepackage{bm}
\usepackage{utfsym}
\usepackage{bussproofs}
\usepackage{latexsym} 
\usepackage{tcolorbox}
\usepackage{booktabs}
\usepackage{wrapfig}
\usepackage{makecell}
\usepackage{multirow}
\usepackage{bbding,pifont}
\usepackage[misc]{ifsym}
\usepackage{orcidlink}
\usepackage{caption}
\usepackage{algorithm}
\usepackage{algpseudocodex}
\usepackage{ifthen}

\spnewtheorem{subproblem}{Problem~1.\!\!}{\itshape}{}

\Crefname{problem}{Problem}{Problem}
\Crefname{subproblem}{Problem~1.\!\!}{Problem~1.\!\!}
\crefname{lemma}{Lem.}{Lem.}
\crefname{example}{Exmp.}{Exmp.}
\crefname{section}{Sect.}{Sect.}
\Crefname{appendix}{Appx.}{Appx.}
\crefname{appendix}{App.}{App.}
\crefname{definition}{Def.}{Def.}
\crefname{theorem}{Thm.}{Thm.}
\crefname{corollary}{Cor.}{Cor.}
\crefname{algorithm}{Alg.}{Alg.}

\spnewtheorem{benchmark}{Benchmark}{\itshape}{\rmfamily}

\input{macros.tex}
\input{tikzkey.tex}

\makeatletter
\def\thanks#1{\protected@xdef\@thanks{\@thanks
        \protect\footnotetext{#1}}}
\makeatother

\begin{document}
\title{Exact Moment Estimation of Stochastic Differential Dynamics}

\titlerunning{Exact Moment estimation of SDE}
%

\ifanonymous
    \author{}
    \institute{}
\else
    \author{
    Shenghua Feng \inst{1,2}\orcidlink{0000-0002-5352-4954} \and
    Jie An \inst{1,2}\orcidlink{0000-0001-9260-9697} \and 
    Naijun Zhan \inst{3,4}\orcidlink{0000-0003-3298-3817} \and
    Fanjiang Xu \inst{1,2}\orcidlink{0009-0004-6016-7360}
    }
    \authorrunning{S. Feng et al.}

    \institute{
    National Key Laboratory of Space Integrated Information System, Institute of Software Chinese Academy of Sciences, Beijing, China \and
    University of Chinese Academy of Sciences, Beijing, China \and
    School of Computer Science, Peking University, Beijing, China \and
    Zhongguancun Laboratory, Beijing, China \\
    \email{\{fengshenghua,anjie,fanjiang\}@iscas.ac.cn} \quad \email{njzhan@pku.edu.cn}
    }
\fi

\maketitle              
\setcounter{footnote}{0}

\begin{abstract}

Moment estimation for stochastic differential equations (SDEs) is fundamental to the formal reasoning and verification of stochastic dynamical systems, yet remains challenging and is rarely available in closed form. In this paper, we study time-homogeneous SDEs with polynomial drift and diffusion, and investigate when their moments can be computed exactly. We formalize the notion of moment-solvable SDEs and propose a generic symbolic procedure that, for a given monomial, attempts to construct a finite-dimensional linear ordinary differential equation (ODE) system governing its moment, thereby enabling exact computation. We introduce a syntactic class of pro-solvable SDEs, characterized by a block-triangular structure, and prove that all polynomial moments of any pro-solvable SDE admit such finite ODE representations. This class strictly generalizes linear SDEs and includes many nonlinear models. Experimental results demonstrate the effectiveness of our approach.

\end{abstract}

\keywords{Stochastic Dynamical Systems, Stochastic Differential Equations (SDEs), Moment Estimation, Pro-solvable SDEs}

\section{Introduction}\label{sec:introduction}

Stochastic differential equations (SDEs) are foundational mathematical models for describing the evolution of stochastic systems across a wide range of fields, from physics and biology~\cite{panik2017stochastic} to finance~\cite{Black+Scholes/1973/Pricing} and engineering~\cite{Hoogendoorn+Others/2004/Pedestrian}. Moments, defined as expectations of monomial functions of the system variables, provide critical insights into the behavior and stability of the underlying stochastic systems. Consequently, moment estimation is a central problem in various applications, including formal verification of stochastic dynamical systems~\cite{ghusinga2020moment,lamperski2018analysis,bartocci2019automatic}, sensitivity analysis~\cite{armstrong2021sensitivity,gunawan2005sensitivity}, and the derivation of bounds for safety-critical models~\cite{scarciotti2021moment,moosbrugger2022moment}.




Despite their importance, exact moment estimation, i.e., obtaining closed-form expressions for moments of SDEs, remains a formidable challenge.
Since SDEs, even linear ones, are often difficult to solve in closed form, their moments are correspondingly hard to compute directly. To the best of our knowledge, existing approaches based on the synthesis of supermartingales~\cite{williams1991probability} can provide bounds on moments~\cite{prajna2007framework}, but these methods are often conservative and rely heavily on the choice of template functions for the supermartingale, which may require significant manual effort. As illustrated in our case study (cf.~\cref{subsec:case_study}), such approaches may fail to yield tight estimates for the moments of interest. Currently, there is no general framework for the exact computation of moments in general SDEs.

In this paper, we focus on time-homogeneous SDEs with polynomial drift and diffusion terms, and seek to systematically characterize when the moments can be computed exactly. Motivated by advances in the verification of probabilistic programs~\cite{bartocci2019automatic}, we introduce the notion of \emph{moment-solvable} SDEs, for which every moment admits an explicit, closed-form solution. Central to our approach is a symbolic procedure that, given any monomial, attempts to construct a finite-dimensional linear ODE system governing the moment dynamics. This procedure iteratively expands the set of coupled moments by applying the infinitesimal generator of the SDE, and halts if the expansion closes after finitely many steps.

Our main theoretical contribution is the identification of a syntactic class of SDEs, termed \emph{pro-solvable SDEs}, which are characterized by a block-triangular structure in their coefficients. We prove that for all pro-solvable SDEs, the symbolic closure procedure always terminates, ensuring that every moment can be computed by solving a finite-dimensional linear ODE system. Notably, this class strictly generalizes the linear SDEs, encompassing a broad array of nonlinear models encountered in practice.


We demonstrate the practical effectiveness of our approach through experiments on a diverse suite of SDE benchmarks. Our method efficiently computes exact moments for many linear and nonlinear systems of interest, including higher-order cases, illustrating its scalability and applicability to the formal analysis of stochastic systems.

In summary, our main contributions are as follows:
\begin{itemize}
\item After the problem formulation (\cref{sec:problem}), we introduce the concept of moment-solvable SDEs and establish a general procedure for constructing finite-dimensional ODE systems governing the moment dynamics. (\cref{sec:moment-solvable-SDE})
\item We introduce and characterize the class of pro-solvable SDEs, proving that they are moment-solvable and that all their moments can be computed exactly. Furthermore, we provide a complexity analysis of our method. (\cref{sec:solvable_SDEs})
\item We implement our method and conduct experiments to demonstrate the broad applicability and effectiveness of our approach. (\cref{sec:experiments})
\end{itemize}


\myparagraph{Related work}
The closest related works are those on \emph{prob-solvable loops} and related classes of probabilistic programs in discrete time~\cite{bartocci2019automatic,moosbrugger2022moment}, where loop moments satisfy solvable linear recurrences and can be computed exactly. Our setting differs in that we consider continuous-time SDEs and build on the infinitesimal generator.
A second line of work provides structural exact-moment results for stochastic reaction networks~\cite{lee2009moment} and jump Markov processes~\cite{sontag2015exact,borri2020cubification}, where specific network topologies (i.e., feedforward structures) yield closed finite moment equations. These results, however, are tailored to particular classes of jump processes~\cite{guidoum2023exact,sontag2018examples} and do not offer a uniform procedure for general polynomial SDEs.
Finally, existing works employ martingales~\cite{prajna2007framework,feng2020unbounded,hafstein2018lyapunov} and semidefinite relaxations~\cite{ghusinga2017approximate,lasserre2018moment} to approximate moments and provide bounds used for the verification and analysis of SDEs and stochastic hybrid systems.



\section{Problem Formulation}\label{sec:problem} 

Let $\Nats$, $\Ints$, and $\Reals$ denote the sets of natural numbers, integers, and real numbers, respectively. Vectors are denoted in bold; for $\xx \in \Reals^n$, $x_i$ refers to its $i$-th component. For $\xx = (x_1,\dots,x_n)^\top$, let $\alpha \defeq (\alpha_1,\dots,\alpha_n)\in\Nats^n$ be a multi-index, with $|\alpha| \defeq \sum_{i=1}^n \alpha_i$. We use the notation $\xx^\alpha \defeq x_1^{\alpha_1}\cdots x_n^{\alpha_n}$ for presenting the corresponding monomial.

\myparagraph{Probability and Moments}
Let $(\Omega, \mathcal{F}, P)$ be a probability space, where $\Omega$ is the sample space, $\mathcal{F} \subseteq 2^\Omega$ is a $\sigma$-algebra, and $P\colon \mathcal{F} \to [0, 1]$ is a probability measure. A \emph{random variable} $X$ defined on $(\Omega, \mathcal{F}, P)$ is an $\mathcal{F}$-measurable function $X\colon \Omega \to \Reals^n$; its expectation (w.r.t. $P$) is denoted by $\EE[X]$. Given a multi-index $\alpha = (\alpha_1,\dots,\alpha_n)\in\Nats^n$, the $\alpha$-moment of $X$ is $\EE[X^\alpha] \defeq \EE[(X^{(1)})^{\alpha_1} \cdots (X^{(n)})^{\alpha_n}]$, where $X^{(i)}$ denotes the $i$-th component of $X$. A (continuous-time) \emph{stochastic process} is a collection of random variables $\{X_t\}_{t \in T}$, where unless otherwise noted, the index set $T$ is the half-line $[0, \infty)$.

\myparagraph{Stochastic Differential Equations (SDEs)}
We consider a class of stochastic dynamical systems governed by time-homogeneous stochastic differential equations (SDEs) of the form
\begin{equation}\label{eq:sde}
    dX_t = b(X_t)\,dt + \sigma(X_t)\,dW_t, \quad t \ge 0,
\end{equation}
where $\{X_t\}$ is an $n$-dimensional continuous-time stochastic process, $\{W_t\}$ is an $m$-dimensional Wiener process (standard Brownian motion), $b\colon \Reals^n \to \Reals^n$ is a vector-valued polynomial drift coefficient modeling the deterministic part of the dynamics, and $\sigma\colon \Reals^n \to \Reals^{n \times m}$ is a matrix-valued polynomial diffusion coefficient encoding the system's coupling to Gaussian white noise $dW_t$.

Under standard regularity and growth conditions~\cite[Chap.~5.2]{oksendal2013stochastic}, given an initial state (random variable) $X_0$, the SDE~\eqref{eq:sde} admits a unique solution $X_t(\omega) = X(t, \omega)\colon [0, \infty) \times \Omega \to \Reals^n$ that satisfies the stochastic integral equation
\[
X_t = X_0 + \int_0^t b(X_s)\,ds + \int_0^t \sigma(X_s)\,dW_s.
\]
The solution process $\{X_t\}$ of~\eqref{eq:sde} is also called an \emph{(It\^{o}) diffusion process}, and may be denoted $X_t^{0, X_0}$ (or simply $X_t^{X_0}$) to indicate the initial condition $X_0$ at time $t=0$. In the special case where $\sigma \equiv 0$, the SDE reduces to an ordinary differential equation (ODE), recovering the classical deterministic setting.

The \emph{exact moment estimation} (EME) problem of SDEs studied in this paper reads as follows:
\begin{tcolorbox}[boxrule=1pt,colback=white,colframe=black!75, left = 1pt, right = 1pt, top = 0pt, bottom = 0pt]
\noindent\textbf{EME Problem.} 
Let $\{X_t\}_{t\ge 0}$ denote the solution of SDE~\eqref{eq:sde}. Our objective is to compute the $\alpha$-moment of the random variable $X_t$, that is,
\[
m_\alpha(t) \ddefeq \EE[X_t^\alpha]
= \EE\big[\big(X_t^{(1)}\big)^{\alpha_1} \cdots \big(X_t^{(n)}\big)^{\alpha_n}\big]
\]
for any given multi-index $\alpha = (\alpha_1, \dots, \alpha_n) \in \Nats^n$ and any $t \ge 0$.
\end{tcolorbox} 



\section{Reduction of EME Problem to ODE Solving} \label{sec:moment-solvable-SDE}
In this section, we formalize the concept of \emph{moment-solvable} SDEs, which precisely delineates those systems for which the EME problem admits an explicit solution. We then present a generic symbolic procedure that, given a target monomial, systematically attempts to construct a finite-dimensional linear ODE system governing the evolution of its moment, thus enabling exact computation of the desired expectation.

\begin{definition}[Moment-solvable SDE]\label{def:moment-solvable}
Given a multi-index $\alpha \in \Nats^n$, we say that the SDE~\eqref{eq:sde} is \emph{moment-solvable for $\alpha$} if there exists an explicit function
$h_\alpha \colon [0, \infty) \to \mathbb{R}$ such that
\[
m_\alpha(t) = \EE[X_t^\alpha] = h_\alpha(t)
\quad\text{for all } t \ge 0.
\]
If this property holds for all multi-indices $\alpha$, then the SDE is \emph{moment-solvable}.
\end{definition}

Clearly, solving the EME problem for a multi-index $\alpha$ is equivalent to establishing that the SDE is moment-solvable for $\alpha$. In practice, obtaining explicit formulas for moments is highly nontrivial. First, closed-form solutions for nonlinear SDEs are generally unavailable. Second, even when a stochastic representation of the solution is known, the evaluation of $\EE[X_t^\alpha]$ typically involves high-dimensional integrals that do not admit simple analytical expressions.

To circumvent these challenges, rather than attempting to compute $\EE[X_t^\alpha]$ directly, we instead consider the time evolution of moments. By deriving differential equations for moments using the infinitesimal generator, we seek to construct a closed, finite-dimensional linear ODE system for the evolution of a suitable collection of moments. This approach is grounded in Dynkin’s formula, which serves as the stochastic analogue of the Newton–Leibniz rule and connects the dynamics of the process to the evolution of expected values.

\begin{theorem}[Dynkin’s formula~\cite{oksendal2013stochastic}]\label{thm:dykin}
Let \(\{X_t\}_{t\ge 0}\) be the solution of~\eqref{eq:sde}.
If \(f \in C^{2}(\mathbb{R}^n)\) has compact support, then for all \(t\ge 0\),
\[
\frac{\mathrm{d}}{\mathrm{d}t}\EE[f(X_t)] = \EE[(\mathcal{A}f)(X_t)],
\]
where \(\mathcal{A}\) is the infinitesimal generator of~\eqref{eq:sde} given by
\[
\mathcal{A}f(\xx)
= \sum_{i=1}^n b_i(\xx)\frac{\partial f}{\partial x_i}(\xx)
+ \frac{1}{2}\sum_{i,j=1}^n (\sigma\sigma^\top)_{ij}(\xx)\,
  \frac{\partial^2 f}{\partial x_i\partial x_j}(\xx).
\]
\end{theorem}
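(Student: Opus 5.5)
Since $f$ is $C^2$, I will apply It\^o's formula to $f(X_t)$, take expectations, and differentiate in $t$. It\^o's formula gives
\[
f(X_t) = f(X_0) + \int_0^t (\mathcal{A}f)(X_s)\,ds + \int_0^t \nabla f(X_s)^\top \sigma(X_s)\,dW_s .
\]
The drift term produces $\sum_i b_i \partial_i f$. The quadratic variation term produces $\tfrac12\sum_{i,j}(\sigma\sigma^\top)_{ij}\partial_{ij} f$, because $d\langle X^{(i)},X^{(j)}\rangle_t = (\sigma\sigma^\top)_{ij}(X_t)\,dt$. This identifies the operator $\mathcal{A}$ as written in the statement.

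Next I take expectations, and the stochastic integral should vanish. Because $f$ has compact support $K$, the integrand $\nabla f(\xx)^\top\sigma(\xx)$ vanishes outside $K$. On $K$ it is continuous, since $\sigma$ is polynomial and $f\in C^2$, and hence bounded by some constant $C$. So $\EE\int_0^t |\nabla f(X_s)^\top\sigma(X_s)|^2 ds \le C^2 t<\infty$. The It\^o integral is therefore a true square-integrable martingale with zero mean. This yields the integral form of Dynkin's formula:
\[
\EE[f(X_t)] = \EE[f(X_0)] + \EE\Big[\int_0^t (\mathcal{A}f)(X_s)\,ds\Big].
\]

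To pass to the derivative, I first use Fubini, which applies because $\mathcal{A}f$ is continuous and compactly supported, hence bounded. It gives $\EE[f(X_t)] = \EE[f(X_0)] + \int_0^t g(s)\,ds$ with $g(s) := \EE[(\mathcal{A}f)(X_s)]$. By the fundamental theorem of calculus, it then suffices that $g$ is continuous in $s$. This holds because $s\mapsto X_s$ has continuous paths and $\mathcal{A}f$ is bounded and continuous, so dominated convergence gives $g(s_k)\to g(s)$ whenever $s_k\to s$. Hence $\frac{d}{dt}\EE[f(X_t)] = g(t)$ for all $t\ge 0$, with a one-sided derivative at $t=0$.

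I expect the main obstacle to be justifying that the martingale term has zero expectation. Compact support of $f$ handles this cleanly, since polynomial $\sigma$ would otherwise not be bounded. The remaining points are the existence of a continuous solution, assumed in the paper under the standing regularity conditions, and the continuity of $g$ needed to upgrade the integral identity to a pointwise derivative.
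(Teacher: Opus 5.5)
Your proof is correct. It is the standard argument behind the cited result: It\^o's formula, plus the observation that compact support of $f$ makes the integrand $\nabla f^\top\sigma$ bounded, so the stochastic integral is a true zero-mean martingale. The paper does not prove this theorem but only cites it, and the cited source states it in integral form; your extra Fubini step and the continuity of $s\mapsto\mathbb{E}[(\mathcal{A}f)(X_s)]$ correctly turn that integral identity into the pointwise derivative stated here.
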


\begin{remark}
The compact-support assumption in \cref{thm:dykin} is primarily a technical requirement. By a standard localization argument, Dynkin’s formula can be extended to many unbounded functions under mild integrability/growth conditions. Specifically, the formula applies to polynomials whenever the corresponding moments exist; that is, if the relevant moments are finite, Dynkin’s formula remains valid for monomials. See Appendix~\ref{appendix:extension_dynkin} for a formal derivation.
Notably, the assumption of moment existence is appropriate here, as our primary objective is the computation of these moments. Such existence can be verified using a ranking supermartingale or a Lyapunov function, which are standard approaches in the literature (see, e.g., \cite[Chap. 1, Sect. 1.4]{khasminskii2011stochastic} and \cite[Chap. 11]{meyn2012markov}).
\end{remark}

Since both the drift vector $b(\xx)$ and the diffusion matrix $\sigma(\xx)$ are polynomial, applying Dynkin's formula to the test function $f(\xx) = \xx^\alpha$ yields
\begin{align}
    \frac{d}{dt} & m_\alpha(t) \eeq \mathbb{E}[\mathcal{A}(\mathbf{x}^\alpha)|{\mathbf{x}=X_t}]  
    \TAG{Suppose $\displaystyle \mathcal{A}(\xx^\alpha) = \sum_\gamma a_{\alpha\gamma} \xx^\gamma + c_\alpha$ } \\ 
   & \eeq  \EE\left[\sum_\gamma a_{\alpha\gamma} X_t^{\gamma}\right] + c_\alpha
    = \sum_\gamma a_{\alpha\gamma} \mathbb{E}[X_t^{\gamma}] + c_\alpha = \sum_\gamma a_{\alpha\gamma} m_{\gamma}(t) + c_\alpha \label{eq:moment_derivative}
\end{align}
This procedure reveals that the dynamics of any single moment are coupled to the dynamics of other moments. We can systematically uncover the full set of coupled moments by starting with our initial moment $m_\alpha$ and recursively applying the generator $\mathcal{A}$ to any new monomials that appear on the right-hand side of \cref{eq:moment_derivative}.

\begin{algorithm}[!t]
  \caption{Construction of a finite moment system for a given multi-index $\alpha$}
  \label{alg:moment-closure}
  \begin{algorithmic}[1]
    \Require Drift $b$, diffusion $\sigma$, generator $\mathcal{A}$ of~\eqref{eq:sde},
             initial multi-index $\alpha$
    \Ensure Multi-index set $S$ which entails a closed linear ODE of moments if terminates.
    \State $S \gets \{\alpha\}$ \Comment{set of multi-indices (monomials)}
    \State $\mathcal{P} \gets \{\alpha\}$ \Comment{set of unprocessed multi-indices}
    \While{$\mathcal{P} \neq \emptyset$}
      \State Select and remove some $\beta \in \mathcal{P}$
      \State Compute the generator action on the monomial $x^\beta$:
      $
        \mathcal{A} \xx^\beta = \sum_{\gamma} a_{\beta\gamma}\,\xx^\gamma + c_\beta
      $
      \ForAll{$\gamma$ such that $a_{\beta\gamma} \neq 0$}
        \If{$\gamma \notin S$}
          \State $S \gets S \cup \{\gamma\}$
          \State $\mathcal{P} \gets \mathcal{P} \cup \{\gamma\}$
        \EndIf
      \EndFor
    \EndWhile
    \State Let $S = \{\alpha,\beta^1,\dots,\beta^k\}$ be the final set of indices.
    \State Form the moment vector
           $m(t) \gets \big(\EE[X_t^{\alpha}],\EE[X_t^{\beta_1}],\dots,\EE[X_t^{\beta_k}]\big)^\top$.
    \State Use \cref{thm:dykin} to obtain the closed linear ODE system
           $\dfrac{\mathrm{d}}{\mathrm{d}t} m(t) = A\, m(t) +c$.
  \end{algorithmic}
\end{algorithm}

We formalize this process by constructing a set of monomials that is closed under the action of $\mathcal{A}$, as described in~\cref{alg:moment-closure} (the \emph{moment closure algorithm}). The key steps are summarized as follows:
\begin{enumerate}
  \item \textbf{Initialization.} Initialize the set of monomials \(S := \{\xx^\alpha\}\).
  \item \textbf{Closure construction.} While there exists a monomial $\xx^\beta \in S$ that has not yet been processed, compute $\mathcal{A}(\xx^\beta)$ and expand it as a linear combination of monomials (line 4-5 in \cref{alg:moment-closure}):
        \[
        \mathcal{A}\xx^\beta = \sum_{\gamma} a_{\beta\gamma}\,\xx^\gamma + c_\beta.
        \]
        For each new monomial $\xx^\gamma$ appearing on the right-hand side that is not already in $S$, add $\xx^\gamma$ to $S$. Mark $\xx^\beta$ as processed. (line 6-9 in \cref{alg:moment-closure})
  \item \textbf{Exact moment calculation.} If this procedure terminates after finitely many steps, we obtain a finite multi-index set $ S = \{\alpha, \beta_1,\dots,\beta_k\}$.  In this case, defining the moment vector
$
m(t) \defeq \big(\EE[X_t^{\alpha}],\EE[X_t^{\beta_1}],\dots,\EE[X_t^{\beta_k}]\big)^\top,
$
Dynkin’s formula yields a closed linear ODE system
\begin{equation}\label{eq:linear_system}
    \frac{\mathrm{d}}{\mathrm{d}t} m(t) = A\, m(t)\, + \, c\,,
\end{equation}
where the matrix \(A\) and the vector $c$ collects the coefficients from the
generator expansions.
\end{enumerate} 

If~\cref{alg:moment-closure} terminates, the resulting finite-dimensional ODE system yields explicit expressions for all moments in the set $S$, and in particular for the target moment $\EE[X_t^\alpha]$. Consequently, whenever the closure procedure terminates, the SDE is moment-solvable for $\alpha$. We formalize it as the following theorem.

\begin{theorem}[Explicit moment computation] \label{thm:moment_cal}
If~\cref{alg:moment-closure} terminates, then SDE~\eqref{eq:sde} is moment-solvable for $\alpha$. 
In particular, the moment vector $m(t)$ can be explicitly expressed as
{
\setlength{\abovedisplayskip}{3pt}
\setlength{\belowdisplayskip}{3pt}
\begin{equation}\label{eq:explicit_solution}
\big(\EE[X_t^{\alpha}],\EE[X_t^{\beta_1}],\dots,\EE[X_t^{\beta_k}]\big)^\top \eeq m(t) 
\eeq e^{A t} m(0)
  + \int_{0}^{t} e^{A (t-s)}\, c \, \dif s.
\end{equation}
}where $m(0) = \big(\EE[X_0^{\alpha}],\,\EE[X_0^{\beta_1}],\,\dots,\,\EE[X_0^{\beta_k}]\big)^\top$ is determined by the initial distribution, and the target moment $\EE[X_t^{\alpha}]$ is given by the first component of $m(t)$.
\end{theorem}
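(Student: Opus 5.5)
The plan is to show that termination of \cref{alg:moment-closure} produces a finite set $S$ that is closed under $\mathcal{A}$ up to constants. We then derive the linear ODE~\eqref{eq:linear_system} from Dynkin's formula and solve it by variation of constants.

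\textbf{Closure of $S$.} When the loop terminates, every $\beta\in S$ has been selected from $\mathcal{P}$ exactly once, since each index enters $\mathcal{P}$ at the same moment it enters $S$. Hence $\mathcal{A}\xx^\beta$ was expanded for every $\beta \in S$. Every $\gamma$ with $a_{\beta\gamma}\neq 0$ was added to $S$ if it was not already there, so for each $\beta\in S$ we have $\mathcal{A}\xx^\beta=\sum_{\gamma\in S}a_{\beta\gamma}\xx^\gamma+c_\beta$. The constant monomial is absorbed into $c_\beta$, so it is harmless even if it never enters $S$. Order $S$ as $(\alpha,\beta_1,\dots,\beta_k)$. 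Define $A$ as the $(k+1)\times(k+1)$ matrix with entries $A_{\beta\gamma}=a_{\beta\gamma}$ for $\beta,\gamma\in S$, and $c$ as the vector with entries $c_\beta$.

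\textbf{The moment ODE.} Applying the extended Dynkin formula (the remark after \cref{thm:dykin} and Appendix~\ref{appendix:extension_dynkin}) to each monomial $\xx^\beta$, $\beta\in S$, and using linearity of expectation as in \eqref{eq:moment_derivative}, gives $\dot m_\beta(t)=\sum_{\gamma\in S}a_{\beta\gamma}m_\gamma(t)+c_\beta$. In vector form this is exactly $\dot m=Am+c$.

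This step is the main obstacle. It needs the moments $m_\gamma(t)$ for $\gamma\in S$ to be finite, and in fact locally integrable in $t$, so that the localization argument justifies differentiating $\EE[X_t^\beta]$. I would impose this as the standing moment-existence assumption, as the paper's remark does. The tool is the integrated form $m_\beta(t)=m_\beta(0)+\int_0^t \EE[\mathcal{A}\xx^\beta|_{\xx=X_s}]\,ds$. Its integrand is a finite linear combination of finite moments, so it is continuous in $s$ once the moments are known to be continuous. Continuity in turn follows from the integral form itself, because the integral of a locally bounded function is continuous. Hence $m$ is $C^1$ and satisfies the ODE classically.

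\textbf{Solving the ODE.} The system $\dot m=Am+c$ has constant coefficients, so its solution is unique by Picard--Lindel\"of, the right-hand side being globally Lipschitz. Differentiating $e^{-At}m(t)$ gives $\frac{d}{dt}\big(e^{-At}m(t)\big)=e^{-At}c$. Integrating from $0$ to $t$ and multiplying by $e^{At}$ yields \eqref{eq:explicit_solution}, with initial data $m(0)$ given by the moments of $X_0$. The matrix exponential and the integral $\int_0^te^{A(t-s)}c\,ds$ are explicit: use the Jordan form of $A$, or $A^{-1}(e^{At}-I)c$ when $A$ is invertible. Hence $h_\alpha(t)$, the first component of $m(t)$, is an explicit function, and by \cref{def:moment-solvable} the SDE is moment-solvable for $\alpha$.
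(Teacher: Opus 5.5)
Your proposal is correct and follows the same route as the paper's (very brief) proof: termination yields a finite set $S$ closed under $\mathcal{A}$ up to constants, Dynkin's formula under the moment-existence assumption gives $\dot m = Am + c$, and variation of constants produces \eqref{eq:explicit_solution}. You only make explicit what the paper leaves implicit, namely that every $\beta\in S$ is processed before termination and that $m$ is $C^1$, so the ODE holds classically.
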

\begin{proof}
    Since \cref{alg:moment-closure} terminates, we obtain a linear ODE system \cref{eq:linear_system} for $m(t)$, whose solution is given explicitly by \cref{eq:explicit_solution}. This completes the proof. \qed
\end{proof}
    
\begin{remark}
    The matrix exponential and integral appearing on the right-hand side of~\eqref{eq:explicit_solution} can typically be evaluated in closed form using standard symbolic computation tools (e.g., \textsc{Mathematica}), making the explicit computation of $m(t)$ readily achievable in practice. We permit such closed-form representations to contain implicit symbolic expressions, as the evaluation of the matrix exponential generally entails computing eigenvalues, which may include implicit algebraic quantities.
\end{remark}

Depending on the specific moment $m_\alpha$ and the structure of the drift and diffusion coefficients, \cref{alg:moment-closure} may either terminate or diverge. The following example illustrates a nonlinear dynamics for which the closure procedure terminates, resulting in a finite-dimensional moment system.

\begin{example}[Chemical process in an Ornstein–Uhlenbeck environment~\cite{kallianpur1994stochastic}]\label{ex:running_example}
Consider a chemical system, where $X_t$ models a fluctuating environment and $Y_t$ denotes the concentration of a chemical species influenced by the environment:
{
\setlength{\abovedisplayskip}{4pt}
\setlength{\belowdisplayskip}{4pt}
\begin{equation}
\label{eq:env-chem-sde}
\begin{cases}
\dif X_t
= - X_t \dif t +  \dif W_t^{(1)}, \\[0.4em]
\dif Y_t
= \bigl(-2 Y_t + X_t +  X_t^{2} \bigr)\dif t
  +  X_t \dif W_t^{(2)},
\end{cases}
\end{equation}
}Suppose the initial state is $(X_0, Y_0) = (0, 0)$. We seek to compute the second moment of $Y_t$, i.e., $m_{(0,2)} = \EE[Y_t^2]$. Applying~\cref{alg:moment-closure}, we obtain a closed $8$-dimensional linear ODE system for the collection of moments
{
\setlength{\abovedisplayskip}{5pt}
\setlength{\belowdisplayskip}{5pt}
\[ 
m_{(0,2)},\quad m_{(2,1)},\quad m_{(2,0)},\quad m_{(1,1)},\quad m_{(4,0)},\quad m_{(3,0)},\quad m_{(0,1)},\quad m_{(1,0)} 
\]
}where $m_{(i,j)} \defeq \EE[X_t^i Y_t^j]$. Solving this system (see Appendix~\ref{appendix:moment-ode} for the explicit ODE system), we obtain an explicit formula for the second moment:
\begin{equation}
\mathbb{E}\bigl[Y_t^2\bigr]
= \frac{1}{3}
+ \frac{2}{3} e^{-3t}
+ \left(-\frac{t}{4} - \frac{11}{8}\right)e^{-2t}
+ \left(\frac{3}{4} t^{2} + t + \frac{3}{8}\right)e^{-4t}\,.
\tag*{\qedT}
\end{equation}
\end{example}



The following example illustrates a case in which \cref{alg:moment-closure} does \emph{not} terminate.

\begin{example}[Double-well potential~\cite{gardiner2004handbook}]
Consider the bistable Langevin system describing the dynamics of a particle in a double-well potential,
\[
\mathrm dX_t = \bigl(a X_t - X_t^{3}\bigr)\,dt + \sigma\,dW_t,
\]
where $a, \sigma \in \mathbb{R}$ are constants.  Applying the generator $\mathcal{A}$ to the monomial $x^n$ for $n \geq 2$, we obtain $\mathcal{A} x^n = - n x^{n+2} + a n x^n + \frac{\sigma^2}{2} n(n-1) x^{n-2}$.
Therefore, starting from $x^n$, each application of the generator introduces a new, higher-degree monomial $x^{n+2}$, and recursively, all monomials of the form $x^{n+2k}$ for $k \geq 0$ are generated. As a result, the set $S$ continues to expand indefinitely, and~\cref{alg:moment-closure} does not terminate.
\qedT
\end{example}

Combining~\cref{thm:moment_cal} with~\cref{def:moment-solvable}, we obtain the following characterization: 

\begin{theorem}\label{thm:termination_implies_solvable}
    Given an SDE~\eqref{eq:sde}, if~\cref{alg:moment-closure} terminates for every monomial $\xx^\alpha$, then the SDE~\eqref{eq:sde} is moment-solvable; that is, explicit closed-form expressions can be computed for all moments.
\end{theorem}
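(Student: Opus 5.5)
The argument reduces \cref{thm:termination_implies_solvable} pointwise in $\alpha$ to \cref{thm:moment_cal}, and then checks that the hypotheses used there hold. Fix an arbitrary multi-index $\alpha\in\Nats^n$. By hypothesis, \cref{alg:moment-closure} run on $\xx^\alpha$ terminates with a finite set $S=\{\alpha,\beta^1,\dots,\beta^k\}$. By construction, every monomial $\xx^\gamma$ with $a_{\beta\gamma}\neq 0$ in the expansion $\mathcal{A}\xx^\beta$ for some $\beta\in S$ lies in $S$. So $S$ is closed under $\mathcal{A}$ modulo constants.

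Next I apply Dynkin's formula (\cref{thm:dykin}) to each $\xx^\beta$ with $\beta\in S$. Since monomials are not compactly supported, I will use the extension in the remark following \cref{thm:dykin}, which is valid under the standing assumption that the relevant moments are finite. This gives $\frac{\mathrm d}{\mathrm dt}m_\beta(t)=\sum_{\gamma\in S}a_{\beta\gamma}m_\gamma(t)+c_\beta$, as in \eqref{eq:moment_derivative}. Stacking these equations yields the finite linear system \eqref{eq:linear_system} with constant $A$ and $c$. The initial value $m(0)$ is given by the moments of $X_0$.

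Linear ODEs with constant coefficients have unique solutions, and $m(t)$ is a $C^1$ solution of this system. Therefore $m(t)$ equals the variation-of-constants formula \eqref{eq:explicit_solution}. Its first component is an explicit $h_\alpha(t)$, so the SDE is moment-solvable for $\alpha$ in the sense of \cref{def:moment-solvable}; this is exactly \cref{thm:moment_cal}. Since $\alpha$ was arbitrary, the property holds for all multi-indices, and the SDE is moment-solvable by the second clause of \cref{def:moment-solvable}.

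The only delicate step is the justification of Dynkin's formula for polynomial test functions, together with the differentiability of $m_\beta$. I would discharge this by citing the appendix extension: localize with stopping times $\tau_R$, then pass to the limit by dominated convergence, using finiteness of the moments of all monomials in $S$ (and of the polynomials $\mathcal{A}\xx^\beta$) on bounded time intervals. Everything else is bookkeeping.
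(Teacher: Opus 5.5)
Your proposal is correct and takes the same route as the paper, which obtains the theorem by applying \cref{thm:moment_cal} to each fixed $\alpha$ and then invoking the ``for all multi-indices'' clause of \cref{def:moment-solvable}. Your extra remarks make explicit what the paper leaves implicit in \cref{thm:moment_cal} and its accompanying remark: that $S$ is closed under $\mathcal{A}$ once the algorithm terminates, that the linear ODE has a unique solution, and that the localized Dynkin formula holds under the finite-moment assumption.
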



\section{A Class of SDEs with Moment-Solvable Property}\label{sec:solvable_SDEs}
In this section, we identify a class of SDEs, termed \emph{pro-solvable SDEs}, for which \cref{alg:moment-closure} terminates for any monomial $\xx^\alpha$, thereby ensuring the moment-solvable property by \cref{thm:termination_implies_solvable}. The class of pro-solvable SDEs contains both linear SDEs and certain nonlinear SDEs whose variables exhibit a triangular dependence structure. We present the definition of pro-solvable SDEs in \cref{subsec:pro-solvable} and prove \cref{alg:moment-closure} terminates for any pro-solvable SDE and any multi-index $\alpha$ in \cref{subsec:termination-pro-solvable}.

\subsection{Pro-Solvable SDEs}\label{subsec:pro-solvable}

We begin by formalizing the concept of \emph{ordered partition}, which serves as a foundation for introducing the notion of pro-solvable SDEs.

\begin{definition}[Ordered partition]
We say that non-empty blocks $G_1, G_2, \dots, \\G_r$ form an \emph{ordered partition} of $\{1,2,\dots, n\}$ if 
\[
\{1,2,\dots,n\} = G_1 \cup \cdots \cup G_r, \qquad G_i \cap G_j = \emptyset \quad \text{for any } i\neq j,
\]
and the blocks $G_1, G_2, \dots, G_r$ are equipped with the natural order $G_1 \prec \cdots \prec G_r$. For each $p \in \{1, \dots, r\}$, let $\xx^{(p)} \defeq (x_i)_{i\in G_p}$ denote the collection of variables whose indices belong to block $G_p$, and $\xx^{(<p)} \defeq (x_i)_{i \in G_1 \cup \cdots \cup G_{p-1}}$ be the collection of all variables whose indices belong to the preceding blocks of $G_p$.
\end{definition}

\begin{definition}[Pro-solvable SDEs]\label{def:pro-solvable} 
An SDE~\eqref{eq:sde} is \emph{pro-solvable} if there exists an ordered partition $G_1, G_2, \dots, G_r$ of $\{1,2,\dots, n\}$, such that its drift $b = (b_1, \dots, b_n)$ and diffusion matrix $\sigma = (\sigma_{ik})_{1 \leq i \leq n,\, 1 \leq k \leq m}$ satisfy the \emph{block-triangular affine} structure: for every block $G_p$, and for all $i \in G_p$, $k = 1, \dots, m$,
\begin{align}
b_i(\xx) &= \sum_{j \in G_p} \Lambda_{ij} x_j + B_i\bigl(\xx^{(<p)}\bigr), 
&&\Lambda_{ij} \in \RR, \;\; B_i \in \RR[\xx^{(<p)}], 
\label{eq:block-b}\\
\sigma_{ik}(\xx) &= \sum_{j \in G_p} A_{ikj} x_j + P_{ik}\bigl(\xx^{(<p)}\bigr), 
&&A_{ikj} \in \RR, \;\; P_{ik} \in \RR[\xx^{(<p)}].
\label{eq:block-sigma}
\end{align}
In particular, each $b_i$ and $\sigma_{ik}$ is affine-linear in the variables of its own block $\xx^{(p)}$, and any nonlinearity depends only on variables from earlier blocks $\xx^{(<p)}$.
\end{definition}

The block-triangular affine structure in pro-solvable SDEs imposes a natural hierarchy among the variables: within each block $G_p$, the drift and diffusion coefficients are affine-linear functions of the variables in that block, while any nonlinear dependence is restricted to variables in preceding blocks $\xx^{(<p)}$. It ensures that the moment dynamics associated with variables in higher-indexed blocks only depend on moments of lower-indexed blocks, and never vice versa.

\begin{example}\label{ex:running_example2}
Reconsider the SDE in \cref{ex:running_example}, given in \cref{eq:env-chem-sde}. Suppose its drift and diffusion coefficients are
\[
b(x_1,x_2) =
\begin{pmatrix}
-\,x_1\\[0.2em]
-2x_2 + x_1 + x_1^2
\end{pmatrix},
\qquad
\sigma(x_1,x_2) =
\begin{pmatrix}
1 & 0\\
0 & x_1
\end{pmatrix}.
\]
It is straightforward to verify that SDE~\eqref{eq:env-chem-sde} is pro-solvable under the ordered partition $G_1 = \{1\}$, $G_2 = \{2\}$ with $r = 2$. \qedT
\end{example}

The class of pro-solvable SDEs subsumes several important subclasses. In particular, when the ordered partition is taken as (i) $r = 1$ with $G_1 = \{1,2,\dots, n\}$, the pro-solvable SDEs specialize to the well-known class of linear SDEs, in which both the drift and diffusion coefficients are affine functions of all variables. On the other hand, when the partition is (ii) $r = n$ with $G_i = \{i\}$ for $1 \leq i \leq n$, the pro-solvable condition reduces to the strictly triangular case, where each variable may depend nonlinearly only on those variables with strictly smaller indices. 


\begin{remark}
    A trivial approach of checking pro-solvability of SDEs by enumerating all possible ordered partitions and then checking block triangular affine structure, is exponential in system dimension.
    A more efficient polynomial-time alternative, similar to the solvability checking of the recurrence relation in \cite[Sect. 4]{amrollahi2025solvable}, by constructing a dependency graph of variables, an SDE is pro-solvable iff no strongly connected component contains a nonlinear edge.
\end{remark}

\subsection{Pro-solvable SDEs are moment-solvable} \label{subsec:termination-pro-solvable}
In this subsection, we establish that pro-solvable SDEs guarantee termination of the iterative moment closure procedure described in \cref{alg:moment-closure} (Lines 3–9). That is, for any monomial $\xx^\alpha$, \cref{alg:moment-closure} generates only finitely many new moments, thereby ensuring the moment-solvable property. 

Clearly, the termination of \cref{alg:moment-closure} hinges on whether infinitely many new monomials are added to the set $\mathcal{P}$. The intuitive strategy is to construct a ranking function over monomials such that, whenever a new monomial $\xx^\gamma$ arises from the generator action $\mathcal{A}\xx^\beta$, its rank is no greater than that of $\xx^\beta$.

To formalize this idea, we examine in detail how the operator $\mathcal{A}$ acts on pro-solvable SDEs. For such systems, the generator $\mathcal{A}$ expands as
{\small
\begin{align}
 \mathcal{A}\,f
&   \eeq  \sum_{i=1}^n b_i(\xx)\partial_{x_i} f
+ \frac{1}{2}\sum_{i,j=1}^n \sum_{k=1}^m \sigma_{ik}(\xx)\sigma_{jk}(\xx) \,
  \partial_{x_i}\partial_{x_j} f  \notag \\
 &   \eeq  \sum_{i=1}^n \left(\sum_{l \in G_{I(i)}} \Lambda_{il} x_l + B_i\bigl(\xx^{(<I(i))}\bigr)\right)\partial_{x_i} f  \label{eq:unfold_prosolvable} \\
 + \frac{1}{2} &\sum_{i,j=1}^n   \sum_{k=1}^m   \left(\sum_{l \in G_{I(i)}} A_{ikl} x_l + P_{ik}\bigl(\xx^{(<I(i))}\bigr)\right)\left(\sum_{l \in G_{I(j)}} A_{jkl} x_l + P_{jk}\bigl(\xx^{(<I(j))}\bigr)\right) \,
  \partial_{x_i}\partial_{x_j} f \notag
\end{align}}

\noindent where $I(i)$ denotes the unique block index for which $i$ belongs to $G_{I(i)}$, and $\partial_{x_i}$ is the partial differential operator that maps a function $f$ to $\nicefrac{\partial f}{\partial x_i}$. Thus, the operator $\mathcal{A}$ is a linear combination of primitive terms of the form
\begin{equation}\label{eq:primitive_term}
\xx^\gamma \,\partial_{x_i},\qquad
\xx^\gamma \,\partial_{x_i}^2,\qquad
\xx^\gamma \,\partial_{x_i}\partial_{x_j},    
\end{equation}
where the monomial $\xx^\gamma$ arises from the polynomial coefficients of $b_i$ and $\sigma_{ij}$. Depending on how $\xx^\gamma$ is produced, we distinguish two types of primitive terms.

\begin{definition}[Classification of primitive terms in $\mathcal{A}$]\label{def:classification}
Suppose SDE~\eqref{eq:sde} is pro-solvable with ordered partition $G_1, G_2, \dots, G_r$, then the primitive terms in operator $\mathcal{A}$ are classified into
\begin{itemize}
  \item[(i)] \emph{Linear-produced terms}: if monomial $\xx^\gamma$ in primitive term comes entirely from the affine linear parts
  $
    \sum_{l\in G_p} \Lambda_{il}x_l,
  $
  $
    \sum_{l\in G_p} A_{ikl} x_l
  $ or $\sum_{l\in G_q} A_{jkl} x_l$
  with no factor from any $B_i$, $P_{ik}$ or $P_{jk}$, where $I(i)=p, I(j) = q$. A monomial in linear-produced terms is not necessarily linear, as we allow the product of two affine linear parts.

  \item[(ii)] \emph{Polynomial-produced terms}: if monomial $\xx^\gamma$ in primitive term contains at least one factor
  from some $B_i(\xx^{(<I(i))})$, $P_{ik}(\xx^{(<I(i))})$, or $P_{jk}(\xx^{(<I(j))})$.
\end{itemize}
\end{definition}


For a polynomial-produced primitive term, we define its \emph{source block index} as follows. If the monomial $\xx^\gamma$ contains a factor from $B_i$ or $P_{ik}$ for some $i \in G_p$, then $p$ is regarded as a candidate source block index. In cases where multiple candidate source block indices exist (e.g., cross terms involving both $P_{ik}$ and $P_{jk}$), we select the source block index as the larger one (e.g. $\max\{p,q\}$ if $i \in G_p$ and $j\in G_q$ in the cross term case). Consequently, every polynomial-produced primitive term is associated with a unique source block index $p \in \{1,\dots,r\}$.

\begin{example}
Continuing \cref{ex:running_example2}, the generator $\mathcal{A}$ corresponding to SDE~\eqref{eq:env-chem-sde} simplifies to
\begin{align}
\label{eq:instantiated-generator}
\mathcal{A}
 \eeq -x_1\,\partial_{x_1} 
  + \bigl(-2x_2 + x_1 + x_1^2\bigr)\,\partial_{x_2} 
  + \frac{1}{2}\,\partial_{x_1}^2 
  + \frac{1}{2} x_1^2\,\partial_{x_2}^2 .
\end{align}
Following the classification in \cref{def:classification}, the linear-produced terms $\{x_1\,\partial_{x_1}, x_2\,\partial_{x_2}\}$ are obtained, and the polynomial-produced terms are $\{ x_1\,\partial_{x_2}, x_1^2\,\partial_{x_2}, \partial^2_{x_1}, x_1^2\partial^2_{x_2}\}$. Moreover, the corresponding source block indices for the polynomial-produced terms $x_1\,\partial_{x_2}$, $x_1^2\,\partial_{x_2}$, $\partial^2_{x_1}$, and $x_1^2\,\partial^2_{x_2}$ are $2$, $2$, $1$, and $2$ respectively. \qedT
\end{example}


We proceed to analyze how the exponents change when applying $\mathcal{A}$ to a monomial $\xx^\beta$. To facilitate this analysis, we introduce the notion of the \emph{block exponent sum} and \emph{block exponent difference}.

\begin{definition}
Given the notations above, let $\beta$ be a multi-index with $\xx^\beta = \prod_{i=1}^n x_i^{\beta_i}$. For each block $G_p$, the block exponent sum is defined as
\[
s_p(\beta) \;\defeq\; \sum_{i\in G_p} \beta_i, \qquad\text{for } p = 1, \dots, r.
\]
Moreover, for any two multi-indices $\beta$ and $\beta'$, the block exponent difference is defined by $\Delta s_p(\beta, \beta') \defeq s_p(\beta') - s_p(\beta)$ for $1 \leq p \leq r$.
\end{definition}

Now, given a monomial $\xx^\beta$, let $\xx^{\beta'}$ be any monomial that appears in the expansion of $\mathcal{A} \xx^\beta$. According to \cref{eq:unfold_prosolvable}, there must exist a unique primitive term $T$ of the form $\xx^\gamma \,\partial_{x_i}$, $\xx^\gamma \,\partial_{x_i}^2$, or $\xx^\gamma \,\partial_{x_i}\partial_{x_j}$ such that $T \xx^\beta = c\,\xx^{\beta'}$ for some constant $c$. 
The following result shows that the change in exponents is bounded, and this bound is independent of the specific monomials $\xx^\beta$ and $\xx^{\beta'}$.

\begin{lemma}[Bound on exponent change]\label{lem:bound_on_exponent_change}
Given a monomial $\xx^\beta$, let $\xx^{\beta'}$ be any monomial that appears in the expansion of $\mathcal{A} \xx^\beta$. Suppose $\xx^{\beta'}$ is produced by a primitive term $T$ (i.e., $T$ is of the form given in \cref{eq:primitive_term}, and $T \xx^\beta = c\,\xx^{\beta'}$ for some constant $c$). Then the following properties hold:
\begin{itemize}
    \item If $T$ is a linear-produced term, then $\Delta s_q(\beta, \beta') \le 0$ for all $q = 1, \dots, r$.
    \item If $T$ is a polynomial-produced term with source block index $p$, then 
    \begin{enumerate}
        \item[(1)] $\Delta s_p(\beta, \beta') \le -1$;
        \item[(2)] $\Delta s_q(\beta, \beta') = 0$ for all $q > p$;
        \item[(3)] For each $q < p$, there exists a constant $C_{p, q} \in \mathbb{N}$ (independent of $\beta$ and $\beta'$, depending only on $q$ and on the degrees of $B_i$ and $P_{ik}$ for all $i$ in $G_1 \cup \cdots \cup G_{p}$) such that $0 \le \Delta s_q(\beta, \beta') \le C_{p, q}$.
    \end{enumerate}
\end{itemize}
\end{lemma}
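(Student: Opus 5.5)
The plan is a direct bookkeeping argument. A primitive term $T$ acts on $\xx^\beta$ by first differentiating and then multiplying by $\xx^\gamma$. So we have $\beta' = \beta - e_i + \gamma$ for $T=\xx^\gamma\partial_{x_i}$, and $\beta' = \beta - e_i - e_j + \gamma$ for $T=\xx^\gamma\partial_{x_i}\partial_{x_j}$, where $\partial_{x_i}^2$ is the case $i=j$. This holds provided the result is nonzero, which we may assume because $\xx^{\beta'}$ actually appears. Hence
\[
\Delta s_q(\beta,\beta') = s_q(\gamma) - [\,i\in G_q\,] - [\,j\in G_q\,]
\]
for every block index $q$, with the second indicator absent in the first-order case. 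Each claim then reduces to locating the variables of $\gamma$ and the differentiated indices among the blocks, using the structure \eqref{eq:block-b}--\eqref{eq:block-sigma} of \cref{def:pro-solvable} as expanded in \eqref{eq:unfold_prosolvable}.

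For linear-produced terms there are two cases.
\begin{itemize}
\item For a drift term, $\gamma = e_l$ with $l\in G_{I(i)}$. The exponent moves within block $I(i)$, so every $\Delta s_q = 0$.
\item For a diffusion term, $\gamma = e_l + e_{l'}$ with $l\in G_{I(i)}$ and $l'\in G_{I(j)}$. The two removed indices $e_i, e_j$ are matched blockwise by $e_l, e_{l'}$, so again every $\Delta s_q=0$.
\end{itemize}
The constant part $c_\beta$ does not matter here. In either case $\Delta s_q \le 0$, as claimed.

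For polynomial-produced terms with source block index $p$, write $\gamma$ as a product of factors.
\begin{itemize}
\item For drift, $\gamma$ is a monomial of $B_i$ with $I(i)=p$, so $\gamma$ involves only variables in $\xx^{(<p)}$.
\item For diffusion, $\gamma$ is a product of a monomial of $P_{ik}$ or $\Lambda$-type linear factor in block $I(i)$ with one from $P_{jk}$ or block $I(j)$. At least one factor is a $P$. Since $p=\max$ over the $P$-sources, and $p$ is the larger of the block indices from which a $P$ came, the following holds. Suppose WLOG the $P$-factor comes from $i$, so $I(i)=p$ (or both are $P$'s with $\max = p$). The other factor is either a $P_{jk}$ (variables in blocks $<I(j)\le p$) or a linear factor $x_{l'}$ with $l'\in G_{I(j)}$.
\end{itemize}

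The main obstacle is the cross term in which a $P_{ik}$ with $I(i)=p$ multiplies a linear factor $x_{l'}$, $l'\in G_{I(j)}$, and $I(j)>p$. Then $\gamma$ contains a variable outside $\xx^{(\le p)}$, and I must check the bookkeeping. Removing $e_j$ and adding $e_{l'}$ cancels in block $I(j)$. So apart from this cancelling pair, $\gamma$ lives in $\xx^{(<p)}$ and one index $i\in G_p$ is removed. The same holds if $I(j)\le p$: a linear factor from block $I(j)$ cancels $e_j$, and a $P_{jk}$ factor lives in blocks $<I(j)\le p$ while removing $e_j$. I will handle this by pairing each linear factor with its own differentiated index.

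After this pairing, the net effect in every case is the following. At least one unit is removed from $G_p$, namely $e_i$ with $I(i)=p$. Possibly $e_j$ is removed from some block $I(j)\le p$, and additional exponent mass $\gamma^{\mathrm{poly}}$ is supported in $\xx^{(<p)}$. This gives the three claims in turn.

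\begin{enumerate}
\item[(1)] Block $p$ receives nothing from $\gamma^{\mathrm{poly}}$, so $\Delta s_p \le -1$.
\item[(2)] For $q>p$, only cancelling pairs touch block $q$, so $\Delta s_q=0$.
\item[(3)] For $q<p$ we have $\Delta s_q = s_q(\gamma^{\mathrm{poly}}) - [\,j\in G_q\,]$. This is bounded above by $C_{p,q} := 2\max\{\deg B_i, \deg P_{ik} : i\in G_1\cup\cdots\cup G_p\}$.
\end{enumerate}

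The lower bound $\Delta s_q\ge 0$ in (3) is delicate when $j\in G_q$ with $q<p$ and the $j$-factor is $P_{jk}$. In that case one unit is removed from $G_q$ and the gain comes only from blocks $<q$, so $\Delta s_q$ could equal $-1$. I would first try to choose the source index convention so that this case is absorbed. If the nonnegativity genuinely fails, I would note that the termination argument only needs the upper bound in (3) together with (1)--(2), and state (3) as $-2\le\Delta s_q\le C_{p,q}$, adjusting accordingly.
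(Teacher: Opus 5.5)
Your argument is the same direct bookkeeping as the paper's proof: write $\beta'=\beta-e_i+\gamma$ or $\beta'=\beta-e_i-e_j+\gamma$, then locate $\gamma$ and the differentiated indices among the blocks. It is actually more complete than the paper's. For $\partial_{x_i}\partial_{x_j}$ with $i\neq j$, the paper lists only $P_{ik}P_{jk}$ and never treats the mixed products $x_{l'}P_{ik}$. Your pairing of each linear diffusion factor with its own differentiated index handles those correctly, including the case $I(j)>p$. One minor slip: the linear diffusion factors carry the coefficients $A_{ikl}$, not $\Lambda$.

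On the lower bound in (3), you should not hedge. Your suspicion is right: the claim $0\le\Delta s_q$ is false, and the paper's ``directly check'' overlooks exactly the case you isolated.

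\emph{Counterexample.} Take $n=2$, $m=1$, $G_1=\{1\}$, $G_2=\{2\}$, $b\equiv 0$, and $\sigma_{11}=\sigma_{21}=1$, so $P_{11}=P_{21}=1$. Then $\mathcal{A}=\tfrac12\partial_{x_1}^2+\partial_{x_1}\partial_{x_2}+\tfrac12\partial_{x_2}^2$. The cross term is polynomial-produced with source block index $\max\{1,2\}=2$. We get $\mathcal{A}(x_1^2x_2)=x_2+2x_1$, where $x_1$ comes from $\partial_{x_1}\partial_{x_2}$. Thus $\beta=(2,1)$, $\beta'=(1,0)$, and $\Delta s_1=-1$.

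\emph{General mechanism.} For a $P_{ik}P_{jk}$ term with $I(i)=p>u=I(j)$, one gets $\Delta s_u=s_u(\gamma_{P_{ik}})-1$, where $\gamma_{P_{ik}}$ is the exponent of the monomial taken from $P_{ik}$.

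\emph{No convention fixes it.} Assigning this term the source index $u$ instead would give $\Delta s_p=-1\neq 0$ with $p>u$, violating (2). It can also violate (1), since $\Delta s_u$ may then be positive.

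\emph{Correct statement.} What holds is $-1\le\Delta s_q\le C_{p,q}$ for $q<p$. Your $-2$ is valid but not sharp: only $e_j$ can leave a block $q<p$, and only when the $j$-factor is a $P_{jk}$. As you observed, the proof of \cref{lem:non-increase} uses only $\Delta s_q\le C_{p,q}$ together with $W_q>0$, so the termination theorem survives unchanged.
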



\begin{proof}
The proof proceeds by a case-by-case analysis of the primitive term $T$.

\emph{Case 1: $T$ is a linear-produced term.}
Suppose $T$ is a linear-produced term, then $T$ can only take the form in either $x_l \,\partial_{x_i}$ for some $i, l\in G_p$ or $x_{l} x_{l'}\,\partial_{x_i}\partial_{x_j}$ for some $i, l\in G_p$ and $j, l' \in G_{p'}$. 
A direct calculation shows that $\Delta s_q \le 0$ for all $q=1,\dots,r.$
Indeed, each derivative $\partial_{x_i}$ reduces the exponent of some $x_i$ in its
block by $1$, and the linear coefficient can reintroduce at most as many variables in
that block as there are derivatives; hence, the total exponent in each block never
increases under linear-produced terms.

\emph{Case 2: $T$ is a polynomial-produced term.} Suppose $T$ is a polynomial-produced term with source block index $p$. By \cref{eq:unfold_prosolvable} and the definition of the source block index, $T$ always contains at least one derivative with respect to some $x_i$ in block $G_p$, and $T$ can only take the form in the following three cases: 
\begin{itemize}
  \item[$\ast$] $\xx^\gamma \partial_{x_i}$ with $\xx^\gamma$ a monomial in $B_i(\xx^{(<p)})$;
  \item[$\ast$] $\xx^\gamma \partial_{x_i}^2$ with $\xx^\gamma$ a monomial in
        $2 x_l P_{ik}(\xx^{(<p)})$ or $\left(P_{ik}(\xx^{(<p)})\right)^2$ for some $k$ and some $l \in G_p$;
  \item[$\ast$] $\xx^\gamma \partial_{x_i}\partial_{x_j}$ with $\xx^\gamma$ a monomial in $P_{ik}(\xx^{(<p)}) P_{jk}(\xx^{(<u)})$ for some $k$, and for some $j \neq i$ with $j \in G_u$ and $u \leq p$.
\end{itemize}
In either case, we can directly check that:
\begin{enumerate}
  \item[(1)] $\Delta s_p \le -1$, i.e.\ the total exponent in the source block $G_p$
        strictly decreases;
  \item[(2)] $\Delta s_q = 0$ for all $q>p$, i.e.\ no later block is affected;
  \item[(3)] For each $q<p$, there exists a constant $C_{p,q}\in\Nats$ (\emph{depending only on} $q$ and on the degrees of $B_i$ and $P_{ik}$ for all $i$ in $G_1 \cup \cdots \cup G_{p}$) such that
        $ 0 \le \Delta s_q \le C_{p,q}$.
\end{enumerate}
Intuitively, the above result implies each application of a polynomial-produced term from block $p$ differentiates at least once in some variable of $G_p$, thus reducing $s_p$ by one or two, while the coefficient can reintroduce at most one variable from $G_p$ and some bounded amount of variables from earlier blocks $G_1,\dots,G_{p-1}$. 
Combining both cases, we obtain the desired result. \qed
\end{proof}

At the beginning of this subsection, we mention that our strategy is to construct a ranking function over monomials to prove the termination of \cref{alg:moment-closure}. Now, we introduce the following weighted block degree serving as a ranking function, such that its value over a newly added monomial does not increase. 

\begin{definition}\label{def:weighted_block_degree}
Given the notations above, the \emph{weighted block degree} for a monomial is defined as 
\[
  \deg_W(\xx^\beta) \;\defeq\; \sum_{p=1}^r W_p\, s_p(\beta), 
\]
where the block weights $W_1,\dots,W_r$ are chosen as follows: set $W_1 \defeq 1$, and for each $p=2,\dots,r$, choose $W_p$ inductively by
\[
  W_p \;>\; \sum_{q<p} C_{p,q}\, W_q.
\]
\end{definition}

The following lemma shows that the weighted block degree is indeed non-increasing. 
Intuitively, by \cref{lem:bound_on_exponent_change}, no new higher-degree monomials are created within the same block. Any additional complexity can only arise from dependencies on variables in earlier blocks, whose closure is handled inductively. This behavior is precisely captured by the choice of block weights in \cref{def:weighted_block_degree}.

\begin{lemma}[Non-increase of weighted block degree]\label{lem:non-increase}
Given a monomial $\xx^\beta$, let $\xx^{\beta'}$ be any monomial that appears in the expansion of $\mathcal{A} \xx^\beta$, then $$\deg_W(\xx^{\beta'}) - \deg_W(\xx^\beta)
\lleq 0\,.$$
\end{lemma}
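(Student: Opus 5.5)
The plan is to reduce the claim to \cref{lem:bound_on_exponent_change} by linearity of $\deg_W$ in the block exponent sums. First we write
\[
\deg_W(\xx^{\beta'}) - \deg_W(\xx^\beta) \;=\; \sum_{q=1}^r W_q\,\Delta s_q(\beta,\beta'),
\]
which follows directly from \cref{def:weighted_block_degree} and the definition of $\Delta s_q$. As noted before \cref{lem:bound_on_exponent_change}, every monomial $\xx^{\beta'}$ in the expansion of $\mathcal{A}\xx^\beta$ is produced by a single primitive term $T$ with $T\xx^\beta = c\,\xx^{\beta'}$. It therefore suffices to bound the right-hand side separately for the two kinds of primitive terms in \cref{def:classification}.

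\emph{Linear-produced $T$.} Here \cref{lem:bound_on_exponent_change} gives $\Delta s_q \le 0$ for every $q$. Since all weights are positive ($W_1=1$, and each $W_p$ exceeds a nonnegative sum, so $W_p>0$), every summand is nonpositive, and the difference is $\le 0$.

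\emph{Polynomial-produced $T$ with source block index $p$.} We split the sum into $q>p$, $q=p$ and $q<p$. The terms with $q>p$ vanish by item (2). The term with $q=p$ is at most $-W_p$ by item (1) and $W_p>0$. By item (3), the terms with $q<p$ total at most $\sum_{q<p} C_{p,q}W_q$. Hence
\[
\deg_W(\xx^{\beta'}) - \deg_W(\xx^\beta) \;\le\; -W_p + \sum_{q<p} C_{p,q}W_q \;<\; 0
\]
by the choice of $W_p$. When $p=1$ the sum over $q<p$ is empty and the bound is $-W_1=-1<0$. This gives strict decrease in the polynomial-produced case.

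The only delicate point is that the constants $C_{p,q}$ must be uniform in $\beta,\beta'$. Only then are the weights in \cref{def:weighted_block_degree} well defined before any monomial is fixed, and the inductive choice of $W_p$ is legitimate. \cref{lem:bound_on_exponent_change} supplies exactly this uniformity: $C_{p,q}$ depends only on the degrees of the $B_i$ and $P_{ik}$. So I expect the main (mild) obstacle to be bookkeeping, namely confirming that the source-block convention (taking the larger index in cross terms) matches the case split, so that item (1) applies to the chosen $p$. Once that is checked, the proof is the two-case computation above.
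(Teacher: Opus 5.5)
Your proposal is correct and follows the paper's proof essentially verbatim. Like the paper, you expand $\deg_W(\xx^{\beta'})-\deg_W(\xx^\beta)=\sum_{q}W_q\,\Delta s_q(\beta,\beta')$, then use \cref{lem:bound_on_exponent_change} and the choice of weights to bound it by $0$ for linear-produced terms and by $-W_p+\sum_{q<p}C_{p,q}W_q<0$ for polynomial-produced terms; your explicit checks that every $W_p>0$ and that only the upper bound $\Delta s_q\le C_{p,q}$ in item (3) is needed are details the paper leaves implicit.
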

\begin{proof}
Given the notations above, suppose $T$ is a polynomial-produced term with source block $p$, then by \cref{lem:bound_on_exponent_change}
{
\setlength{\abovedisplayskip}{3pt}
\setlength{\belowdisplayskip}{3pt}
\[
  \deg_W(\xx^{\beta'}) - \deg_W(\xx^\beta)
  = \sum_{q=1}^r W_q\,\Delta s_q
  \;\le\;
  W_p\cdot(-1) + \sum_{q<p} W_q C_{p,q}
  \;<\; 0.
\]
}Therefore, any genuinely new monomial produced by a polynomial-produced term strictly
\emph{decreases} the weighted block degree. 
On the other hand, if $T$ is a linear-produced term,  we have, by \cref{lem:bound_on_exponent_change}
{
\setlength{\abovedisplayskip}{3pt}
\setlength{\belowdisplayskip}{3pt}
\[
  \deg_W(\xx^{\beta'}) - \deg_W(\xx^\beta)
  = \sum_{q=1}^r W_q\,\Delta s_q
  \;\le\; 0,
\]
}Hence, the weighted degree is \emph{nonincreasing} under linear-produced terms. This completes the proof.\qed
\end{proof}

Consider the directed graph whose vertices are monomials and with an edge
$\xx^\beta \to \xx^{\beta'}$ whenever $\xx^{\beta'}$ appears in
$\mathcal{A} \xx^\beta$. Starting from $\xx^\alpha$, any monomial that appears in some
$\mathcal{A}^m(x^\alpha)$ is reachable via a directed path.
Along any path of distinct vertices $
\xx^{\beta^{(0)}} = \xx^\alpha \to \xx^{\beta^{(1)}} \to \xx^{\beta^{(2)}} \to \cdots,
$
the sequence $\deg_W(\xx^{\beta^{(k)}})$ is non-increasing by \cref{lem:non-increase}. Thus, for every reachable monomial $\xx^\beta$ we have
\[
  \deg_W(\xx^\beta) \;\le\; \deg_W(\xx^\alpha).
\]
Since all weights $W_p$ are positive, it implies that each block sum
$s_p(\beta)$ is bounded:
\[
  0 \le s_p(\beta) \le \frac{\deg_W(\xx^\alpha)}{W_p},\qquad p=1,\dots,r.
\]
In particular, each individual exponent is bounded:
\[
  0 \le \beta_i \le s_p(\beta) \le \frac{\deg_W(\xx^\alpha)}{W_p}
  \quad\text{for all } i\in G_p,\;p=1,\dots,r.
\]
Therefore, the set of all multi-indices $\beta\in\Nats^n$ with
$\deg_W(\xx^\beta)\le \deg_W(\xx^\alpha)$ is finite, hence only finitely many monomials are
reachable from $\xx^\alpha$. 
Based on the above analysis, we present the main result for pro-solvable SDEs as follows.
\begin{theorem}\label{thm:pro-solvable}
    If SDE \eqref{eq:sde} is pro-solvable, then \cref{alg:moment-closure} terminates after finitely many iterations for any monomial $\xx^\alpha$, and SDE \eqref{eq:sde} is moment-solvable.
\end{theorem}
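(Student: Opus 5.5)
The plan is to combine the ranking-function argument developed just before the statement with \cref{thm:termination_implies_solvable}. First I fix a pro-solvable SDE with ordered partition $G_1,\dots,G_r$ and choose block weights $W_1,\dots,W_r$ as in \cref{def:weighted_block_degree}. This choice is legitimate because the constants $C_{p,q}$ from \cref{lem:bound_on_exponent_change} depend only on the degrees of the $B_i$ and $P_{ik}$, not on $\alpha$. Hence the weights are finite positive reals (even naturals) that can be fixed once for the whole SDE.

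Next I fix an arbitrary multi-index $\alpha$ and argue by induction on the number of loop iterations of \cref{alg:moment-closure}. The invariant is that every $\beta$ ever inserted into $S$ (equivalently into $\mathcal{P}$) satisfies $\deg_W(\xx^\beta)\le\deg_W(\xx^\alpha)$. Initially $S=\{\alpha\}$, so the invariant holds. Whenever some $\gamma$ is added in lines 6--9, it is a monomial with $a_{\beta\gamma}\neq0$ in $\mathcal{A}\xx^\beta$ for some $\beta$ already in $S$. Every such monomial is produced by a primitive term, so \cref{lem:non-increase} gives $\deg_W(\xx^\gamma)\le\deg_W(\xx^\beta)\le\deg_W(\xx^\alpha)$. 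One subtlety is cancellation of contributions from several primitive terms. This only removes monomials, so the bound still applies to every monomial that actually appears with a nonzero coefficient. The constant $c_\beta$ corresponds to $\gamma=0$, which trivially satisfies the bound.

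I then bound the set of admissible indices. Since every $W_p\ge1>0$, the set $F_\alpha=\{\beta\in\Nats^n:\deg_W(\xx^\beta)\le\deg_W(\xx^\alpha)\}$ is finite: each $\beta_i\le s_p(\beta)\le\deg_W(\xx^\alpha)/W_p$, which gives at most $\prod_i(\lfloor\deg_W(\xx^\alpha)\rfloor+1)$ elements. For termination, note that each index enters $\mathcal{P}$ at most once, because insertion into $\mathcal{P}$ happens only when $\gamma\notin S$, and $S$ only grows. Each iteration removes one element from $\mathcal{P}$, and $S\subseteq F_\alpha$. The while loop therefore executes at most $|F_\alpha|$ times, and each iteration does a finite computation (expanding a polynomial). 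Hence the algorithm terminates.

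Since $\alpha$ was arbitrary, \cref{alg:moment-closure} terminates for every monomial, and \cref{thm:termination_implies_solvable} (via \cref{thm:moment_cal}) yields moment-solvability. This assumes, as in the paper's remark on Dynkin's formula, that the relevant moments are finite. The main obstacle is not in this final assembly but in the validity of \cref{lem:bound_on_exponent_change} itself. Its delicate points are that $C_{p,q}$ is genuinely uniform in $\beta$, and that polynomial-produced terms always differentiate in the source block with a net loss of at least one. For example, $x_lP_{ik}\partial_{x_i}^2$ with $l\in G_p$ removes two and adds one in $G_p$. Since that lemma is stated earlier, I will take it as given and only double-check that the weight construction uses it consistently: $W_p$ depends only on $W_q$ for $q<p$, so the inductive definition is well founded.
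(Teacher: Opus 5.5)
Your proposal is correct and follows the paper's argument: by \cref{lem:non-increase}, $\deg_W$ serves as a non-increasing ranking function, so every index ever placed in $S$ lies in the finite set $\{\beta : \deg_W(\xx^\beta)\le\deg_W(\xx^\alpha)\}$, and \cref{thm:termination_implies_solvable} then gives moment-solvability (your explicit loop invariant simply makes precise the paper's phrasing in terms of paths in the monomial graph). You single out \cref{lem:bound_on_exponent_change} as the delicate point, so note that both you and the paper use only its upper bounds $\Delta s_p\le -1$ and $\Delta s_q\le C_{p,q}$; this matters because the stated lower bound $0\le\Delta s_q$ can fail, e.g.\ for a cross term $P_{ik}P_{jk}\,\partial_{x_i}\partial_{x_j}$ with $I(j)<I(i)$, where $\partial_{x_j}$ can lower $s_{I(j)}$ by one.
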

\begin{proof}
By \cref{lem:non-increase}, only finitely many monomials can appear when iteratively executing lines 3--9 of \cref{alg:moment-closure}, starting from any $\xx^\alpha$. It implies that \cref{alg:moment-closure} terminates for any multi-index $\alpha$, hence SDE~\eqref{eq:sde} is moment-solvable by \cref{thm:termination_implies_solvable}. \qed
\end{proof}


\myparagraph{Complexity Analysis}
The computational complexity of our method for exact moment estimation arises primarily from two sources: constructing the moment closure set $S$ in \cref{alg:moment-closure}, and solving the resulting linear ODE system~\eqref{eq:linear_system}.

\smallskip

\emph{(1) Complexity of closure construction.}  
For a fixed pro-solvable SDE and initial multi-index $\alpha$, let $S_\alpha$ denote the finite set of monomials reachable from $x^\alpha$ by \cref{alg:moment-closure}. By \cref{lem:non-increase} and the argument preceding \cref{thm:pro-solvable}, there exists a constant $C_0$ (depending only on the SDE) such that for every $x^\beta \in S_\alpha$,
\[
|\beta| \eeq \sum_{i=1}^n \beta_i \lleq \sum_{p=1}^r\frac{\deg_W(\xx^\alpha)}{W_p} \lleq C_0 |\alpha|\,.
\]
Hence, $S_\alpha$ is contained within the set of all monomials of total degree at most $C_0|\alpha|$ in $n$ variables, which gives the combinatorial bound
\[
  |S_\alpha| \leq \binom{n + C_0|\alpha|}{C_0|\alpha|}.
\]
In particular, for fixed dimension $n$, this yields $|S_\alpha| = O\bigl((|\alpha|)^n\bigr)$, while for fixed moment order $k = |\alpha|$, we have $|S_\alpha| = O\bigl(n^k\bigr)$. Since \cref{alg:moment-closure} processes each element of $S_\alpha$ at most once, the time and memory complexity of closure construction (cf.\ lines 3 -- 9 in \cref{alg:moment-closure}) is $O(|S_\alpha|)$. 
The hidden constants in the big-O notation depend only on the given SDE, in particular on the degrees and number of monomials in the polynomial drift and diffusion coefficients.

\smallskip

\emph{(2) Complexity of solving the ODE system.}  
According to \cref{eq:explicit_solution} in \cref{thm:moment_cal}, solving the resulting ODE system reduces to computing the matrix exponential $e^{At}$ where $A$ is a $|S_\alpha|$ dimensional matrix.
Using standard algebraic methods (e.g., Jordan or rational canonical form), this can be done in time polynomial in $|S_\alpha|$, with worst-case complexity $\mathcal{O}(|S_\alpha|^3)$. \qedT

\section{Experiments}\label{sec:experiments}
To demonstrate the effectiveness and applicability of our EME framework, we implemented~\footnote{Available at \url{https://github.com/Shenghua-Feng/Exact_Moment_Estimation}} \cref{alg:moment-closure} in Python~3.13, leveraging standard symbolic and numerical linear algebra libraries. Given a polynomial SDE~\eqref{eq:sde} and a target monomial $x^\alpha$, our prototype automatically constructs the moment-closure set $S_\alpha$, derives the corresponding linear ODE system, and computes the desired $\alpha$-moment. 

\myparagraph{Benchmarks}
We evaluated our method on a suite of SDE benchmarks (details see Appendix~\ref{appendix::benchmarks}), encompassing both linear and nonlinear examples in the literature as well as models with practical relevance. Specifically, we present in detail two cases: a consensus network with noise~\cite{olfati2007consensus} and a nonlinear vehicle platoon system adapted from~\cite{kavathekar2011vehicle} to illustrate our method and demonstrate its usefulness for verification problems. All experiments were performed on a MacBook Pro with an Apple~M4 processor, 16~GB of RAM, and running macOS Sequoia.

\subsection{Case studies} \label{subsec:case_study}

\myparagraph{Consensus network with noise~\cite{olfati2007consensus}} This model describes an $n+1$ dimensional multi-agent consensus network (e.g., distributed sensors or robots) with noisy communication:
\[
\mathrm{d}X_i(t)
=
\Bigl(
  a_i - \lambda X_i(t)
  + \kappa\bigl(X_{i+1}(t) - 2 X_i(t) + X_{i-1}(t)\bigr)
\Bigr)\,\mathrm{d}t
+ \sigma_i X_i(t)\,\mathrm{d}W_t^{(i)},
\]
for $i = 1,\dots,n+1$, with periodic boundary conditions $
X_0(t) = X_n(t), X_{n+1}(t) = X_1(t)$. Since this dynamic is linear, it is pro-solvable by \cref{def:pro-solvable}. Consider the case $n=2$ with parameter values $a_1 = 0$, $a_2 = 0$, $\lambda = 1$, $\kappa = 0.5$, $\sigma_1 = 1$, $\sigma_2 = 1$, and initial state $(1,0)$. The verification objective is to ensure that, with probability at least $1 - e^{-t}$, the disagreement between the two agents, encoded by $|X_1(t) - X_2(t)|$, remains less than $0.1$ for any $t \geq 10$, that is,
\[P\left( \left|X_1(t) - X_2(t)\right|\geq 0.1 \right) \leq e^{-t} \quad \text{for }t\geq 10. \]
To this end, we compute $\EE[(X_1(t) - X_2(t))^2]$. By explicitly calculating the moments $\EE[X_1(t)^2]$, $\EE[X_1(t)X_2(t)]$, and $\EE[X_2^2(t)]$, we obtain
\[ \EE[(X_1(t)-X_2(t))^2] \eeq  \frac{(17-3\sqrt{17})e^{\frac{(\sqrt{17}-7)t}{2}}
      + (17+3\sqrt{17})e^{-\frac{(\sqrt{17}+7)t}{2}}}{34}. \]    
Consequently, by applying Markov's inequality, we obtain, for any $t \geq 10$,
$$ P\left(|X_1(t) - X_2(t)| \geq 0.1 \right) \leq \frac{\EE[(X_1(t)-X_2(t))^2]}{0.1^2} \leq e^{-t}\;. $$ 
This verifies the goal. It is worth noting that, to the best of our knowledge, this verification problem cannot be solved directly by a martingale-based approach that seeks a polynomial $h(t, x_1, x_2)$ satisfying the supermartingale condition to upper bound $\EE[(X_1(t)-X_2(t))^2]$. Specifically, if such a polynomial $h$ existed, we have $P(|X_1(t) - X_2(t)| \geq 0.1) \leq 100\, \EE[(X_1(t)-X_2(t))^2] \leq 100\, \EE[h(t,X_1(t), X_2(t))] \leq 100\, h(0, x_1(0), x_2(0)).$ Since there must exist some $T>10$ such that $e^{-T} < 100\, h(0, x_1(0), x_2(0))$, the standard martingale-based method cannot certify the property holds for all $t\geq 10$. \qedT



\medskip

\myparagraph{Vehicle platoon~\cite{kavathekar2011vehicle}}
We consider a nonlinear system adapted from~\cite{kavathekar2011vehicle} that involves two vehicles moving along a straight lane. For each vehicle, denote its position and velocity by $(p_1(t), v_1(t))$ and $(p_2(t), v_2(t))$, respectively. The first vehicle acts as the leader and follows a stochastic acceleration model:
\[
\mathrm{d}p_1 = v_1\,\mathrm{d}t, \qquad 
\mathrm{d}v_1 = \bigl(-a_1 v_1 + u_1\bigr)\,\mathrm{d}t
               + \sigma_1\,\mathrm{d}W_t^{(1)},
\]
where $a_1 > 0$ is a damping coefficient, $u_1$ is a control input (desired acceleration), and $\sigma_1$ scales the driving noise $W^{(1)}_t$. The second vehicle implements a nonlinear control law based on both its own velocity and the velocity of the leader:
\[
\begin{aligned}
\mathrm{d}p_2  = v_2\,\mathrm{d}t, \qquad
\mathrm{d}v_2  = \left(- a_2 v_2 + (v_1 - 1)^2 \right)\,\mathrm{d}t 
   + \sigma_2\,\mathrm{d}W_t^{(2)},
\end{aligned}
\]
where $a_2$ and $\sigma_2$ are parameters. It can be checked that this system is pro-solvable under the ordered partition $G_1 = \{v_1\}$, $G_2 = \{p_1, p_2, v_2\}$. Consider the parameter instantiation $a_1 = 1$, $u_1 = 1$, $\sigma_1 = 1$, $a_2 = 1$, and $\sigma_2 = 1$, with initial state $(p_1, v_1, p_2, v_2) = (1, 0, 0, 0)$. Suppose that the verification objective is to ensure that the expected distance between the vehicles, $\EE[p_1(t) - p_2(t)]$, always remains between $0.5$ and $1.5$ for all $t\geq 0$.

To verify this, we compute the expected values $\EE[p_1(t)]$ and $\EE[p_2(t)]$, yielding
\[
\frac{3}{4} \leq \EE\bigl[p_1(t) - p_2(t)\bigr]
= \frac{3}{4} + \frac{e^{-t}}{2} - \frac{e^{-2t}}{4} \leq 1,
\]
which verifies the desired safety property.\qedT

\setlength{\tabcolsep}{1.1mm}
\begin{table}[!t]
\captionsetup{font={small}}
\caption{Experimental results for exact moment calculation.}
\label{tab:benchmarks}
\begin{center}
  \begin{tabular}{l ccc c cc c crc c cr}
    \toprule
    ~ &\multicolumn{3}{c}{SDE System} & ~ & \multicolumn{2}{c}{Moment} &~ & \multicolumn{3}{c}{Obtained Closure $S_\alpha$}& ~ & \multicolumn{2}{c}{Solve ODE}\\ 
    \cmidrule{2-4} \cmidrule{6-7} \cmidrule{9-11} \cmidrule{13-14}
    Benchmark & dim & deg & p-s & & $\xx^\alpha$ & $|\alpha|$ & & succ & time & $|S_\alpha|$ & & succ & time\\
    \midrule
    
    \multirow{5}{*}{\textsf{ou-env}~\cite{kallianpur1994stochastic}}  & \multirow{5}{*}{2} & \multirow{5}{*}{2} & \multirow{5}{*}{yes} && $\EE[x_2^2]$ & 2   && \Checkmark & 0.01s & 8 && \Checkmark & 0.2s\\
    ~ & ~ & ~ & ~ && $\EE[x_2^3]$ & 3   && \Checkmark & 0.02s & 15 && \Checkmark & 0.6s\\
    ~ & ~ & ~ & ~ && $\EE[x_2^4]$ & 4   && \Checkmark & 0.02s & 24 && \Checkmark & 1.2s\\
    ~ & ~ & ~ & ~ && $\EE[x_2^5]$ & 5   && \Checkmark & 0.04s & 35 && \Checkmark & 2.8s\\
    ~ & ~ & ~ & ~ && $\EE[x_2^{10}]$ & 10   && \Checkmark & 0.17s & 120 && \Checkmark & 39.4s\\
    \hline
    \multirow{3}{*}{\textsf{gene}~\cite{sinigh2007stochastic}} & \multirow{3}{*}{5} & \multirow{3}{*}{3} & \multirow{3}{*}{yes} && $\EE[x_1x_5]$ & 2 && \Checkmark & 0.04s & 23 && \Checkmark & 3.0s\\
    ~ & ~ & ~ & ~ && $\EE[x_5^2]$ & 2 && \Checkmark & 0.14s & 85 && \Checkmark & 79.6s\\
    ~ & ~ & ~ & ~ && $\EE[x_1x_5^2]$ & 3 && \Checkmark & 0.17s & 115 && \Checkmark & 164.1s\\
    \hline
    \textsf{consensus}~\cite{olfati2007consensus} & 2 & 1 & yes && $\EE[x_1x_2]$  & 2   && \Checkmark & 0.01s & 3 && \Checkmark & 0.2s\\
    \textsf{vehicles}~\cite{kavathekar2011vehicle} & 4 & 2 & yes && $\EE[x_2^2]$ & 2 && \Checkmark & 0.01s & 13 && \Checkmark & 0.5s\\
    \textsf{oscillator}~\cite{hafstein2018lyapunov} & 3 & 2 & yes && $\EE[x_2x_3^2]$ & 3 && \Checkmark & 0.01s & 6 && \Checkmark & 2.3s\\
    \textsf{coupled3d} & 3 & 3 & no && $\EE[x_1^2x_2^2]$ & 4 && \Checkmark & 0.01s & 3 && \Checkmark & 0.2s \\
    \bottomrule
  \end{tabular}
\end{center}
\scriptsize{
\textbf{dim}: Dimension of the SDE system; 
\textbf{deg}: Maximum polynomial degree of drift/diffusion terms in the SDE; 
\textbf{p-s}: Whether the SDE is pro-solvable; 
\textbf{$\xx^\alpha$}: Target moment to compute; 
\textbf{$|\alpha|$}: Degree of the target moment; 
\textbf{succ}: Whether a closed linear ODE system was successfully constructed and solved; 
\textbf{time}: Time required to obtain the closure $S_\alpha$ (i.e., construct or solve the linear ODE system); 
$\mathbf{|S_\alpha|}$: Dimension of the resulting linear ODE system. 
}
\end{table}

\subsection{Evaluation of effectiveness}
Table~\ref{tab:benchmarks} summarizes the experimental results of our method on a diverse suite of polynomial SDE benchmarks,  
which cover a range of system dimensions, polynomial degrees, and moment orders, illustrating the generality of our approach.

\myparagraph{Efficiency and scalability}
For all pro-solvable SDEs, 
our method successfully constructs the finite closures and computes the exact moment for all tested cases.
Closure construction times are consistently short, and the dimension $|S_\alpha|$ scales polynomially with the moment order and system size (cf.~\textsf{ou-env}), consistent with our theoretical analysis.
The subsequent ODE solving is also efficient for moderate dimensions, with larger $|S_\alpha|$ (e.g., high-order moments in \textsf{gene}) leading to higher computational cost primarily due to the complexity of matrix exponentiation. Note that developing more efficient symbolic solvers for linear ODEs is a complementary and orthogonal direction to our work; in our implementation, we simply rely on off-the-shelf symbolic packages for this step.

\myparagraph{Comparison across models}
Linear and low-dimensional systems (such as the \textsf{consensus} and \textsf{oscillator}) exhibit particularly fast closure and solution times. For nonlinear pro-solvable examples (e.g., \texttt{ou-env} and \texttt{gene}), the closure remains tractable even for moments of degree up to 10, validating the practical scalability of our framework. 
The benchmark \textsf{coupled3d} further shows that our method may still terminate for certain SDEs that \emph{do not} satisfy the pro-solvable property; however, termination is not guaranteed in general.

Overall, the experimental results demonstrate that our approach is effective, broadly applicable to both linear and a wide class of nonlinear systems, and scales well in practice for pro-solvable SDEs. 

\section{Conclusion}\label{sec:conclusion}
We presented a general symbolic method for exact moment estimation of polynomial SDEs, and identified a broad class of pro-solvable systems in which all moments can be computed exactly via finite-dimensional linear ODEs. Both theoretical analysis and experimental results demonstrate its effectiveness and scalability for a wide range of linear and nonlinear models, paving the way for moment-based verification and analysis of stochastic dynamical systems. 

\myparagraph{Limitations} Despite these results, several limitations warrant further discussion. Our method is inherently conditional on the termination of the closure construction. Many polynomial SDEs induce an infinite moment hierarchy, in which case the exact finite-dimensional reduction is unavailable. Even when termination is guaranteed, scalability, particularly with respect to matrix exponentiation, may be limited by the size of the closed moment set. Finally, pro-solvability is sufficient but not necessary for termination.


\myparagraph{Future work} Firstly, when \cref{alg:moment-closure} diverges, one could explore closure approximations by truncation, together with formal error bounds to preserve verification soundness. Secondly, since pro-solvability is not necessary, it would be of interest to characterize termination criteria beyond pro-solvability. Additionally, leveraging sparsity and block structure in the derived ODEs, along with more scalable matrix-exponential techniques, could substantially improve the scalability. 



\bigskip

\myparagraph{Acknowledgments} 
We thank the anonymous reviewers for their valuable comments and helpful suggestions.
This work has been partially funded by the National Key R\&D Program of China under grant No.\ 2022YFA1005101 and 2022YFA1005102, the Open Foundation of Key Laboratory of Cyberspace Security, Ministry of Education of China and Henan Key Laboratory of Network Cryptography under grant No.\ KLCS20240302, the National NSF of China under grant No.\ 62192732, W2511064, and 62502475, the CAS Project for Young Scientists
in Basic Research, and the ISCAS Basic Research under Grant No.\ ISCAS-JCZD-202406.

\myparagraph{Data Availability Statement} 
The artifact and data are available at \url{https://doi.org/10.5281/zenodo.18630506}.

\bibliographystyle{splncs04}
\bibliography{reference}

\clearpage
\appendix
\newpage

\section{Details for Example~\ref{ex:running_example}} \label{appendix:moment-ode}
In~\cref{ex:running_example}, we obtain a closed $8$-dimensional linear ODE system for the collection of moments
\[ m_{(0,2)},\quad m_{(2,1)},\quad m_{(2,0)},\quad m_{(1,1)},\quad m_{(4,0)},\quad m_{(3,0)},\quad m_{(0,1)},\quad m_{(1,0)} \]
where $m_{(i,j)} \defeq \EE[X_t^i Y_t^j]$. The corresponding ODE system is
\begin{align*}
\dot{m}_{(0,2)}(t) &= - 4 m_{(0,2)}(t) + 2 m_{(2,1)}(t) + m_{(2,0)}(t) + 2 m_{(1,1)}(t) \\
\dot{m}_{(2,1)}(t) &= - 4 m_{(2,1)}(t) + m_{(4,0)}(t) + m_{(3,0)}(t) + m_{(0,1)}(t) \\
\dot{m}_{(2,0)}(t) &= 1 - 2 m_{(2,0)}(t) \\
\dot{m}_{(1,1)}(t) &= m_{(2,0)}(t) - 3 m_{(1,1)}(t) + m_{(3,0)}(t) \\
\dot{m}_{(4,0)}(t) &= 6 m_{(2,0)}(t) - 4 m_{(4,0)}(t) \\
\dot{m}_{(3,0)}(t) &= - 3 m_{(3,0)}(t) + 3 m_{(1,0)}(t) \\
\dot{m}_{(0,1)}(t) &= m_{(2,0)}(t) - 2 m_{(0,1)}(t) + m_{(1,0)}(t) \\
\dot{m}_{(1,0)}(t) &= - m_{(1,0)}(t)
\end{align*}
Solving this linear ODE system yields the explicit expression
\[
\mathbb{E}\bigl[Y_t^2\bigr] = m_{(0,2)}(t) =  \frac{1}{3}
+ \frac{2}{3} e^{-3t}
+ \left(-\frac{t}{4} - \frac{11}{8}\right)e^{-2t}
+ \left(\frac{3}{4} t^{2} + t + \frac{3}{8}\right)e^{-4t}\,.
\]

\section{Benchmarks} \label{appendix::benchmarks}

\begin{benchmark}[\textnormal{\textsf{ou-env}~\cite{kallianpur1994stochastic}}]
The system dynamics is the same as in \cref{ex:running_example}: 
\begin{equation}
\begin{cases}
\dif X_t
= - X_t \dif t +  \dif W_t^{(1)}, \\[0.4em]
\dif Y_t
= \bigl(-2 Y_t + X_t +  X_t^{2} \bigr)\dif t
  +  X_t \dif W_t^{(2)},
\end{cases}
\end{equation}
with initial state  $(X_0, Y_0) = (0, 0)$.\qedT
\end{benchmark}

\smallskip

\begin{benchmark}[\textnormal{\textsf{gene}~\cite{sinigh2007stochastic}}]
The system dynamics is: 
\begin{equation}
\begin{cases}
\mathrm{d}X_{1,t}
= \big( - X_{1,t} + 1 \big)\,\mathrm{d}t
  + 0.5\,\mathrm{d}W^{(1)}_t, \\[6pt]
\mathrm{d}X_{2,t}
= \big( 1.2\, X_{1,t} - 0.8\, X_{2,t} \big)\,\mathrm{d}t
  + \big( 0.3\, X_{1,t} + 0.4 \big)\,\mathrm{d}W^{(2)}_t, \\[6pt]
\mathrm{d}X_{3,t}
= \big( 1.0\, X_{2,t} - 0.7\, X_{3,t}
       + 0.2\, X_{1,t}^{2} \big)\,\mathrm{d}t
  + \big( 0.5\, X_{2,t} + 0.1\, X_{1,t}^{2} \big)\,\mathrm{d}W^{(3)}_t, \\[6pt]
\mathrm{d}X_{4,t}
= \big( 0.9\, X_{3,t} - 0.6\, X_{4,t}
       + 0.1\, X_{1,t} X_{2,t} \big)\,\mathrm{d}t
  + \big( 0.4\, X_{3,t} + 0.2\, X_{2,t}^{2} \big)\,\mathrm{d}W^{(4)}_t, \\[6pt]
\mathrm{d}X_{5,t}
= \big( 0.8\, X_{4,t} - 0.5\, X_{5,t}
       + 0.15\, X_{3,t}^{2} + 0.05\, X_{1,t}^{3} \big)\,\mathrm{d}t \\[3pt]
\qquad
  + \big( 0.3\, X_{4,t} + 0.1\, X_{3,t}^{2} + 0.05\, X_{1,t}^{3} \big)
    \,\mathrm{d}W^{(5)}_t .
\end{cases}
\end{equation}
with initial state  $X_{i,0}= 0$ for $i = 1,2,\dots, 5$.\qedT
\end{benchmark}

\smallskip

\begin{benchmark}[\textnormal{\textsf{consensus}~\cite{olfati2007consensus}}]
The system dynamics correspond to those in the first case study, namely the consensus network with noise. Under the specific parameter instantiation considered there, the dynamics are given by
\begin{equation}
\begin{cases}
\mathrm{d}X_{1,t}
=
\bigl(- 2 X_{1,t} + X_{2,t}\bigr)\,\mathrm{d}t
+ \,X_{1,t}\,\mathrm{d}W_t^{(1)},\\[8pt]
\mathrm{d}X_{2,t}
=
\bigl( X_{1,t} - 2 X_{2,t}\bigr)\,\mathrm{d}t
+ \,X_{2,t}\,\mathrm{d}W_t^{(2)}.
\end{cases}
\end{equation}
with initial state  $(X_{1,0}, X_{2,0}) = (1, 0)$.\qedT
\end{benchmark}

\begin{benchmark}[\textnormal{\textsf{vehicles}~\cite{kavathekar2011vehicle}}]
The system dynamics correspond to those in the second case study, namely the vehicle platoon. Under the specific parameter instantiation considered there, the dynamics are given by
\begin{equation}
\begin{cases}
\mathrm{d}p_1 = v_1\,\mathrm{d}t, \\[4pt]
\mathrm{d}v_1 = \bigl(- v_1 + 1\bigr)\,\mathrm{d}t
               + \,\mathrm{d}W_t^{(1)}, \\[4pt]
\mathrm{d}p_2  = v_2\,\mathrm{d}t, \\[4pt]
\mathrm{d}v_2  = \left(-  v_2 + (v_1 - 1)^2 \right)\,\mathrm{d}t 
   + \,\mathrm{d}W_t^{(2)}.
\end{cases}
\end{equation}
with initial state  $(p_1, v_1, p_2, v_2) = (1, 0, 0, 0)$.\qedT
\end{benchmark}

\smallskip

\begin{benchmark}[\textnormal{\textsf{oscillator}~\cite{hafstein2018lyapunov}}]
The system dynamics is: 
\begin{equation}
\begin{cases}
\mathrm{d}X_{1,t}
= X_{2,t}\,\mathrm{d}t,\\[4pt]
\mathrm{d}X_{2,t}
= \bigl(-0.3\,X_{2,t} - X_{1,t}  + 0.8\,X_{3,t}^2\bigr)\,\mathrm{d}t
  + 0.2\,X_{2,t}\,\mathrm{d}W^{(1)}_t,\\[4pt]
\mathrm{d}X_{3,t}
= - X_{3,t}\,\mathrm{d}t + 0.5\,\mathrm{d}W^{(2)}_t,
\end{cases}
\end{equation}
with initial state  $X_{i,0}= 0$ for $i = 1,2, 3$.\qedT
\end{benchmark}



\smallskip

\begin{benchmark}[\textnormal{\textsf{coupled3d}}]
The system dynamics is: 
\begin{equation}
\begin{cases}
\mathrm{d}X_{1,t}
= \bigl(-\tfrac{1}{2} X_{1,t} - X_{1,t} X_{2,t} - \tfrac{1}{2} X_{1,t} X_{2,t}^{2}\bigr)\,\mathrm{d}t
  + X_{1,t}\bigl(1 + X_{2,t}\bigr)\,\mathrm{d}W^{(1)}_{t}\\[4pt]
\mathrm{d}X_{2,t}
= \bigl(-X_{2,t} + X_{3,t}\bigr)\,\mathrm{d}t
  + 0.3\,X_{3,t}\,\mathrm{d}W^{(2)}_{t},\\[4pt]
\mathrm{d}X_{3,t}
= \bigl(X_{2,t} - X_{3,t}\bigr)\,\mathrm{d}t
  + 0.3\,X_{2,t}\,\mathrm{d}W^{(3)}_{t}.
\end{cases}
\end{equation}
with initial state  $(X_{1,0}, X_{2,0}, X_{3,0})= (0, 0, 0)$.\qedT
\end{benchmark}

\section{Extension of Dynkin's Formula}\label{appendix:extension_dynkin}
{\newcommand{\1}{\mathbf{1}}
\newcommand{\R}{\mathbb{R}}
\newcommand{\E}{\mathbb{E}}
\newcommand{\wt}{\wedge}
In this section, we demonstrate that Dynkin's formula also holds for monomials, provided that the moments exist (i.e., they take finite values).

Let $f(x) = \xx^\alpha$ and $\{X_t\}_{t\geq 0}$ be the solution to the SDE. For any $R>0$ define the stopping time
\[
\tau_R := \inf\{t\ge 0:\ ||X_t||\ge R\}.
\]
Let $\chi\in C_c^\infty(\R^n)$ satisfy $0\le\chi\le 1$, $\chi(\xx)=1$ for $||\xx||\le 1$, and $\chi(\xx)=0$ for $||\xx||\ge 2$.
Set $\chi_R(\xx)\defeq \chi(\xx/R)$ and define the compactly supported test function
\[
f_R(\xx)~\defeq~\chi_R(\xx)\,f(\xx)~=~\chi(\xx/R)\,\xx^\alpha.
\]
Then $f_R\in C_c^2(\R^n)$, so the \emph{standard} Dynkin formula (stated for compactly supported test functions) applies to $f_R$. Thus, 
\begin{equation}
\frac{\dif}{\dif t}\E\big[f_R(X_{t\wt\tau_R})\big]
=
\E\big[(\mathcal{A} f_R)(X_t)\big],
\end{equation}
Since $f_R(X_s)=f(X_s)$, $(\mathcal{A} f_R)(X_s)=(\mathcal{A} f)(X_s)$ on the event $\{s<\tau_R\}$, this further simplies to 
\begin{equation}\label{eq:derivative-stopped}
\frac{\dif}{\dif t}\E\big[f(X_{t\wt\tau_R})\big]
=
\E\big[(\mathcal{A} f)(X_t)\,\1_{\{t<\tau_R\}}\big],
\end{equation}
Moreover, since the SDE is polynomial, $A f(X_t)$ is bounded by a polynomial in $X_t$. Assuming the moments of $X_t$ exist (i.e. less than infinity), the dominated convergence theorem justifies taking $R \to \infty$, yielding
\begin{align*}
\frac{\dif}{\dif t}\E\big[f(X_{t})\big]  = \lim_{R\to \infty}
\frac{\dif}{\dif t}\E\big[f(X_{t\wt\tau_R})\big]
& = \lim_{R\to \infty}
\E\big[(Af)(X_t)\,\1_{\{t<\tau_R\}}\big] \\
&= \, \E\big[ \lim_{R\to \infty} (Af)(X_t)\,\1_{\{t<\tau_R\}}\big] = \E\big[(Af)(X_t)\big] 
\end{align*}
Note the assumption that the moments exist is standard can be verified using a ranking supermartingale or a Lyapunov function, as is common in the literature.
}
\end{document}